\title{Model Checking for Modal Dependence Logic:\\An Approach Through Post's Lattice}
\author{Julian-Steffen M\"uller \and Heribert Vollmer}
\institute{Institut f\"ur Theoretische~Informatik\\%
Leibniz Universit\"at Hannover\\%
Appelstr.~4, 30167~Hannover, Germany\\%
\email{\{mueller,vollmer\}@thi.uni-hannover.de}}
\tikzstyle{clone}=[x=1cm,y=.5cm,ellipse,draw=black,thick,minimum size=5.5mm,inner sep=1mm]
\tikzstyle{cloneRelation}=[->]
\tikzstyle{complexityClass1}=[fill=Goldenrod!60]
\tikzstyle{complexityClass2}=[fill=LimeGreen!80]
\tikzstyle{complexityClass3}=[fill=Red]
\newcommand\problemdef[4]%
\spnewtheorem{observation}[note]{Observation}{\bfseries}{\itshape}
\newcommand{\defEQ}{:=}
\newcommand{\Model}{\mathscr{M}}
\newcommand{\Clone}[1]{\mathrm{#1}}
\newcommand{\CloneID}{\Clone{ID}}
\newcommand{\CloneE}{\Clone{E}}
\newcommand{\CloneV}{\Clone{V}}
\newcommand{\CloneM}{\Clone{M}}
\newcommand{\CloneN}{\Clone{N}}
\newcommand{\CloneL}{\Clone{L}}
\newcommand{\CloneBF}{\Clone{BF}}
\newcommand{\MC}[1]{#1\textnormal{-MC}}
\newcommand{\MDL}{\mathrm{MDL}}
\newcommand{\MDLMC}{\MC{\MDL}}
\newcommand{\NLOGSPACE}{\mathrm{NL}}
\newcommand{\CO}[1]{\textrm{co-}#1}
\newcommand{\PTIME}{\mathrm{P}}
\newcommand{\NPTIME}{\mathrm{NP}}
\newcommand{\PPOWERNP}[1][1]{\PTIME^{\NPTIME[#1]}}
\newcommand{\PSPACE}{\mathrm{PSPACE}}
\newcommand{\QBF}{\mathrm{QBF}}
\newcommand{\SAT}{\mathrm{SAT}}
\newcommand{\CNF}{\mathrm{CNF}}
\newcommand{\evalTerm}[3][\Model]{\langle{#2}\rangle_{#3}^#1}
\newcommand{\succTeamSet}[1]{R\langle#1\rangle}
\newcommand{\succTeam}[2][R]{#1(#2)}
\newcommand{\lTrue}{\mathrm{t}}
\newcommand{\lFalse}{\mathrm{f}}
\newcommand{\lAnd}{\wedge}
\newcommand{\lOr}{\vee}
\newcommand{\lSJ}{\otimes}
\newcommand{\lAtomicNeg}[1]{\overline{#1}}
\newcommand{\lBox}{\protect\ensuremath{\Box\kern-1.2ex\raisebox{.2ex}{$\cdot$}\kern0.55ex}}
\newcommand{\lDiamond}{\Diamond}
\newcommand{\lAll}{\Box}
\newcommand{\lDep}[1][\cdot]{{\textnormal{dep}\ifthenelse{\equal{#1}{}}{}{(\nobreak#1\nobreak)}}}
\newcommand{\lNeg}{\neg}
\newcommand{\lParity}{\oplus}
\newcommand{\REACH}{\mathrm{REACH}}
\begin{document}
	\maketitle
	
	\begin{abstract}
		In this paper we investigate an extended version of modal dependence logic by allowing arbitrary Boolean connectives. Modal dependence logic was recently introduced by Jouko V\"a\"an\"anen by extending modal logic by a the dependence atom $\lDep$. In this paper we study the computational complexity of the model checking problem. For a complete classification of arbitrary Boolean functions we are using a Lattice approach introduced by Emil Post. This classification is done for all fragments of the logical language allowing modalities $\lDiamond$ and $\lAll$, the dependence atom, and logical symbols for arbitrary Boolean functions.
		
	\end{abstract}

	\section{Introduction}
	
	Many algorithmic problems for propositional logic and its extensions are presumably computationally intractable, the most prominent of course the simple satisfiability problem SAT known to be NP-complete. For propositional modal logic, satisfiability is even PSPACE-complete \cite{la77}. Much effort has therefore been spent on identifying fragments of the logical language that admit efficient algorithms for satisfiability, see \cite{le79} for propositional logic and \cite{hescsc10} for modal logic. These studies first extend propositional (modal) logic by allowing arbitrary Boolean connectives (i.e., logical symbols for arbitrary Boolean functions) in the formulas, and then classify the computational complexity of satisfiability for each finite subset $B$ of allowed Boolean functions/connectives. An important tool in these complexity classifications is Post's lattice of all closed classes of Boolean functions, also known as Boolean clones, since it can be shown that the complexity of satisfiability for logic with connectives from $B$ depends only on the clone $[B]$ generated by $B$.

In this paper we are interested in modal dependence logic. This logic extends (propositional) modal logic by dependence atoms, i.e., atomic formulas that describe functional dependencies between variables. This logic was introduced recently by V\"a\"an\"anen \cite{va08} and examined from a complexity theoretic point of view in \cite{se09,eblo12,lovo10}.
While the model checking problem for propositional modal logic is known to be efficiently solvable (i.e., in polynomial time) \cite{filad79}, it gets PSPACE-complete for modal dependence logic. The above sketched approach to identify efficiently solvable fragments making use of the structure of Post's lattice does not work here, because the semantics of the Boolean connectives is not immediate in dependence logic. For example, $\lSJ$ (here called splitjunction) and $\rightarrow$ (intuitionistic implication) are defined in somewhat non-classical ways making use of so called team-semantics, see \cite{va08,se09}.
Ebbing and Lohmann \cite{eblo12} examined the complexity of a few fragments, but the fragments were given by somewhat arbitrary bases; their results determine the complexity of model checking in some important special cases, but the full picture is still missing.

In the present paper we introduce a novel approach to the study of fragments of dependence logic: We do not aim at a classification of all fragments defined by arbitrary dependence connectives like splitjunction or intuitionistic implication. Instead we make a distinction between dependence connectives on the one hand side and classical Boolean connectives on the other side. In other words, we introduce connectives given by Boolean function into dependence logic and define their semantics in the classical way. Then it can be observed that for this latter class of connectives an approach via Post's lattice is possible, and this is what we exploit in this paper. We achieve a classification of the model checking problem for modal dependence logic for all fragments of the language making use of dependence atoms, one or both modalities, and arbitrary Boolean connectives. 
As we will explain, the complexity will depend not on the particular choice of Boolean functions that we allow in our formulas, but on the clones in Post's lattice that is defined by the set of Boolean functions. In this way, the mentioned results from \cite{eblo12} will allow us more generally to determine the complexity of model checking for all monotone clones. We then extend these observations to all the remaining clones by considering also the connectives of logical negation and exclusive-or.

For the results presented here, we do not consider dependence connectives (splitjunction, intuitionistic implication, etc.), but we come back to this question in the conclusion.
	
After introducing the reader to dependence logic and our extension via arbitrary classical Boolean connectives we shortly recall basic results about Post's lattice in Sect.~\ref{prelim}. Then, in Sect.~\ref{results} we prove our classification results. We will see that when restricting the language to the modality $\lAll$, model checking becomes a very efficiently solvable task, independently of what else we allow in our language. Introducing the modality $\lDiamond$, however, makes model checking hard. We obtain fragments that are NP-complete, some are complete for $\mathrm P^{\mathrm{NP}[1]}$. The technically most interesting theorem of our paper shows that as soon as the connective exclusive-or is present or can be simulated, model checking reaches its maximal complexity and becomes PSPACE-complete.

	\section{The Modal Language and Its Fragments}\label{prelim}

We first define syntax and semantics of modal dependence logic.
	
	\begin{definition}\rm
		Let $B$ be a set of Boolean functions. Then we define the set of $\MDL_{B}$-formulae ($B$-formulae for short) as follows: 
		Every variable $p$ is a ${B}$-formula. 
		If $p_{1},\dots,p_{n},q$ are variables, then $\lDep[p_{1},\dots,p_{n},q]$ is a ${B}$-formula.
		If $f$ is an $n$-ary function in $B$ and $\phi_1,\dots,\phi_n$ are ${B}$-formulae, then $f(\phi_1,\dots,\phi_n)$ is a ${B}$-formula.
		If $\phi$ is a ${B}$-formula, then $\lDiamond \phi$ and $\lAll \phi$ are ${B}$-formulae.		
		
For $U\subseteq\{\lAll,\lDiamond,\lDep\}$ we say that a $B$-formula is a $(B,U)$-formula, if it uses only logical symbols from $B\cup U$.
	\end{definition}

	We remark that, as usual, we do not distinguish in our notation between a Boolean function $f$ and a logical symbol for $f$.
	
	The dependence atom $\lDep[p_{1},\dots,p_{n},q]$ is meant to express that the value of $q$ functionally depends on those of $p_{1},\dots,p_{n}$.
Unlike in usual modal logic, it does not make sense to evaluate such a formula in a single state but in a set of states 
(in this context called \emph{team}), and this is different from
evaluating the formula in each state separately.

As usual, in a Kripke structure $\Model= (W,R,\pi)$ the set of all successors of $T\subseteq W$ is defined as $\succTeam{T} = \{s \in W \mid \exists s' \in T:  (s',s) \in R\}$.
Furthermore we define $\succTeamSet{T} = \{T'\subseteq \succTeam{T} \mid \forall s\in T ~ \exists s' \in T' : (s,s') \in R\}$.

	\begin{definition}\rm
		Let $\Model$ be a Kripke structure, $T$ be a team over $\Model$ and $\phi$ be a ${B}$-formula. The the semantic evaluation (denoted as $\Model,T \models \phi$) is defined by the induction below. We also define the function 
$\evalTerm{\cdot}{T}$ which maps a formula to a truth value, 
where $\evalTerm{\phi}{T}$ is true if and only if $\Model,T \models \phi$.
		
		$$\begin{array}{lll}
			\Model, T \models p & \textnormal{ if } & \textnormal{for all } w \in T: p \in \pi(w)\\
			\Model, T \models \lAtomicNeg{p} & \textnormal{ if } & \textnormal{for all } w \in T: p \not\in \pi(w)\\
			\Model, T \models \lDep[p_{1},\dots,p_{n},q] & \textnormal{ if } & \textnormal{for all } w,w' \in T :\\
			&& \pi(w) \cap \{p_{1},\dots,p_{n}\} = \pi(w') \cap \{p_{1},\dots,p_{n}\}\\ 
			&& \textnormal{implies }q \in \pi(w) \Leftrightarrow q \in \pi(w')\\ 
			\Model, T \models f(\phi_1,\dots,\phi_n) & \textnormal{ if } & f(\evalTerm{\phi_1}{T},\dots,\evalTerm{\phi_n}{T})=1 \\
			\Model , T \models \lDiamond \phi & \textnormal{ if } & \textnormal{there is a } T' \in \succTeamSet{T} \textnormal{ such that } \evalTerm{\phi}{T}\\
			\Model , T \models \lAll \phi & \textnormal{ if } & \evalTerm{\phi}{\succTeam{T}}\\
		\end{array}$$
	\end{definition}

These modalities, as defined by V\"a\"an\"anen, do not fulfill the usual dualities; as a technical tool for our upcoming results we therefore define a further modality by $\lBox\phi \equiv \neg\lDiamond\neg\phi$. Also, note that $\lAll\phi \equiv \neg\lAll\neg\phi$.

We collect some important observations, all of which follows quite immediately from the definitions. 

	\begin{lemma}\label{lemma:lAll}
		Let $\phi,\phi'$ be $\MDL$ formulae. Then the following axioms are satisfied on all Kripke models $\Model$.
		\begin{enumerate}
			\item $\lAll(\phi \lAnd \phi') \rightarrow \lAll\phi \lAnd \lAll\phi'$.
			\item $\lAll(\phi \lOr \phi')  \rightarrow \lAll\phi \lOr \lAll\phi'$.
			\item $\lAll(\lNeg\phi)  \rightarrow \lNeg\lAll\phi$.
		\item 		Let $f^n$ a n-ary Boolean formula over the basis $\Clone{B}$. Then $\lAll{f(\phi_1,\dots,\phi_n)} \rightarrow f(\lAll\phi_1,\dots,\lAll,\phi_n)$ holds.
		\end{enumerate}	
	\end{lemma}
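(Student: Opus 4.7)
The plan is to observe first that items (1)--(3) are just the special cases of item (4) for $f$ equal to $\lAnd$, $\lOr$ and $\lNeg$, respectively, so it suffices to prove the last item. Moreover, a quick look at the semantic clauses suggests that (4) is in fact an equivalence rather than just a one-sided implication; I will prove this stronger statement since it costs nothing extra, and (1)--(3) then follow by specialisation.

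To prove (4), fix an arbitrary Kripke model $\Model$ and team $T$ and unroll the two semantic clauses that apply on either side. On the left, the clause for $\lAll$ reduces $\evalTerm{\lAll f(\phi_1,\dots,\phi_n)}{T}$ to $\evalTerm{f(\phi_1,\dots,\phi_n)}{\succTeam{T}}$, and then the Boolean-connective clause turns this into $f(\evalTerm{\phi_1}{\succTeam{T}},\dots,\evalTerm{\phi_n}{\succTeam{T}})$. On the right one applies the two clauses in the opposite order: the Boolean-connective clause first rewrites $\evalTerm{f(\lAll\phi_1,\dots,\lAll\phi_n)}{T}$ as $f(\evalTerm{\lAll\phi_1}{T},\dots,\evalTerm{\lAll\phi_n}{T})$, and then the $\lAll$ clause replaces each $\evalTerm{\lAll\phi_i}{T}$ by $\evalTerm{\phi_i}{\succTeam{T}}$. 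The two expressions coincide, proving the equivalence and in particular the required implication.

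The actual content of the lemma is essentially nothing beyond this unrolling, but it is worth flagging why the computation goes through so smoothly: the classical semantics chosen for Boolean connectives defines them via the scalar map $\evalTerm{\cdot}{T}$, so applying a Boolean function on formulae and then passing to $\succTeam{T}$ commutes trivially with first passing to $\succTeam{T}$ and then applying the Boolean function. This is exactly the commutativity that fails for genuinely team-semantic connectives such as splitjunction, where one is forced to split $T$ rather than evaluate pointwise, and for which only one-sided implications in the style of (1)--(3) survive. Here, however, no real obstacle arises; the whole lemma is a direct consequence of the design of the semantic clauses.
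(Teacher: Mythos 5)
Your proof is correct, and it reverses the paper's direction of derivation: the paper first establishes items (1)--(3) by unrolling the semantic clauses for $\lAll$ and for $\lAnd,\lOr,\lNeg$, and then obtains item (4) by rewriting an arbitrary $f$ as a formula over the basis $\{\lAnd,\lOr,\lNeg\}$ and applying (1)--(3) iteratively; you instead prove (4) directly from the general clause $\Model,T\models f(\phi_1,\dots,\phi_n)$ iff $f(\evalTerm{\phi_1}{T},\dots,\evalTerm{\phi_n}{T})=1$ and recover (1)--(3) as instances. The underlying computation is the same in both cases --- the classical (scalar) semantics of the Boolean connectives commutes with the substitution $T\mapsto\succTeam{T}$ --- but your route is more uniform, avoids the detour through a particular basis, and makes explicit that the implication is actually an equivalence (the paper's chain of $\Leftrightarrow$'s shows this too but does not state it). Your closing remark about why the argument breaks for splitjunction correctly identifies the point of the lemma. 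One minor gap to flag: the lemma speaks of an $n$-ary Boolean \emph{formula} over the basis $B$, i.e.\ possibly a composition of connectives from $B$ rather than a single symbol, so your one-step unrolling of ``the Boolean-connective clause'' should formally be an induction on the structure of $f$; this is a one-line addition (each layer of the composition is evaluated pointwise on scalars, so the whole composition commutes with $T\mapsto\succTeam{T}$) and the same implicit induction is present in the paper's ``apply (1)--(3) iteratively''.
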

	\begin{proof}
		Let $\phi, \phi'$ be $\MDL$ formulae, $\Model$ be a Kripke model and $T$ be a team over $\Model$. Let $\odot \in \{\lOr, \lAnd\}$. Then the axioms 1 and 2 will follow by simple equivalencies from the definition as follows:
		$$\begin{tabular}{lcl}
				$\Model, T \models \lAll(\phi \odot \phi')$ & $\Leftrightarrow$ & $\Model, \succTeam{T} \models (\phi \odot \phi')$\\ 
				& $\Leftrightarrow$ & $\Model, \succTeam{T} \models \phi \text{ ''$\odot$''} \Model, \succTeam{T} \models \phi')$\\
				& $\Leftrightarrow$ & $\Model, T \models \lAll\phi \odot \lAll\phi'$
			\end{tabular}$$
		The axiom 3 is stating a self duality propertie of $\lAll$ and can be proven by the following equivilance.
		
		$$\begin{tabular}{lcl}
				$\Model, T \models \lAll\lNeg\phi$ & $\Leftrightarrow$ & $\Model, \succTeam{T} \models \lNeg\phi \hspace*{0.3mm} \Leftrightarrow$ $\Model, \succTeam{T} \not\models \phi$\\ 
				& $\Leftrightarrow$ & $ \Model, \succTeam{T} \not\models \lAll\phi \Leftrightarrow \Model, T \models \lNeg\lAll\phi$
			\end{tabular}$$
		Axiom 4 directly follows from the axioms 1-3, because each Boolean function $f$ can be efficiently transformed into a logically equivalent function $f'$ over the basis $\{\lAnd, \lOr, \lNeg\}$. By applying the axioms 1-3 iteratively on $f'$ we obtain the axiom 4.
		\qed
	\end{proof}
	
	The algorithmic problem family whose computational complexity we want to determine in this paper is defined as follows. Here, $B$ denotes a finite set of Boolean functions, and $U\subseteq\{\lAll,\lDiamond,\lDep\}$.
		
	\problemdef{$\MDLMC({B,U})$}{Model checking problem for $(B,U)$-formulae}{$(B,U)$-formula $\phi$, Kripke model $\Model$ and team $T$}{Is $\phi$ satisfied in $\Model$ on $T$}

%
%
%
%
	

	\begin{figure}[htp!]
		\centering
		\begin{tikzpicture}[scale=0.7]			
			\node[clone, label={[xshift=.7cm, yshift=-.5cm]$\emptyset$}] (ID) at (0,0)  {$\CloneID$};
			\node[clone, label={[xshift=.7cm, yshift=-.6cm]$\{\lAnd\}$}] (E) at (-3,4) {$\CloneE$};
			\node[clone, label={[xshift=.7cm, yshift=-.6cm]$\{\lOr\}$}] (V) at (-1,4) {$\CloneV$};
			\node[clone, label={[xshift=.9cm, yshift=-.6cm]$\{\lAnd, \lOr\}$}] (M) at (-2,8) {$\CloneM$};
			\node[clone, label={[xshift=.7cm, yshift=-.6cm]$\{\lNeg\}$}] (N) at (1,4) {$\CloneN$};
			\node[clone, label={[xshift=.7cm, yshift=-.6cm]$\{\oplus\}$}] (L) at (1,8) {$\CloneL$};
			\node[clone, label={[xshift=1cm, yshift=-.6cm]$\{\lAnd,\lNeg\}$}] (BF) at (-0.5,12) {$\CloneBF$};
			
			\path[cloneRelation] (ID) edge (E) (ID) edge (V) (ID) edge (N);
			\path[cloneRelation] (E) edge (M) (V) edge (M);
			\path[cloneRelation] (M) edge (BF);
			\path[cloneRelation] (N) edge (L);
			\path[cloneRelation] (L) edge (BF);
		\end{tikzpicture}
		\caption{Post's Lattice for Boolean clones with both constants and their ``standard'' basises}\label{mc-clones}
	\end{figure}
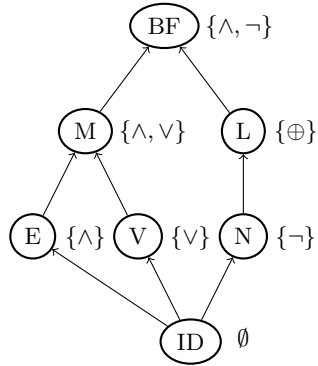

\subsection*{Post's Lattice}	
	  
  Emil Post \cite{pos41} classified the lattice of all closed sets of Boolean functions---called \emph{clones}---and 
  found a finite base for each clone. For an arbitrary finite set $B$ of Boolean functions we define $[B]$ to be the clone generated by $B$, i.e., the class of all Boolean functions that contains $B$, all projections (identities), and is closed under composition.
  A list of all clones as well as the full inclusion graph can be found, for example, in~\cite{bcrv03}.
  Whereas in general there is an infinite set of clones,
  for model checking luckily there are only seven different clones~\cite{bmsssh11}.
  This is essentially due to the fact that
  the constants for \emph{false} and \emph{true} do not need to be part of the language
  but can be expressed by atoms that are either nowhere or everywhere satisfied in the model. In other words, $\MDLMC(\Clone{B})\equiv\MDLMC(\Clone{B}\cup\{0,1\})$ ($\equiv$ denotes a suitable reductions, e.g., polynomial-time, logspace, or even constant-depth).
  
But there are only seven clones that contain the constants, see Fig.~\ref{mc-clones}. This means (and is proved formally in \cite{thomas12}) that if one wishes to study the computational complexity of model checking for propositional formulas with logical connectives restricted to some set $B$ of Boolean functions, it is not necessary to consider all infinite possibilities for such sets $B$ but actually suffices to consider these seven clones, depicted in Fig.~\ref{mc-clones}, where we describe the clones by their standard bases 
(we use $\oplus$ to denote the exclusive or).

As an example, notice that even though $\{\wedge,\oplus\}$ is not a base for all Boolean functions,
  it suffices to express all Boolean functions w.r.t.\ model checking problems
  because of the ``free'' existence of the constants; e.g., $\neg x=x\oplus1$ and $x\vee y=\left((x\oplus1)\wedge(y\oplus1)\right)\oplus1$.

To summarize, given any finite set $B$ of Boolean functions/propositional connectives, the computational complexity of model checking for formulas over $B$ is equivalent to the complexity of model checking for one of the bases given in Fig.~\ref{mc-clones}. 


Hence, in all upcoming results, if we classify the computational complexity of a model checking problem for the bases in Fig.~\ref{mc-clones}, we have in fact achieved a full complexity classification for all finite sets $B$ of Boolean connectives.

	%
	%
	
	%
	%
	
	\section{Complexity Results}\label{results}

	We first study fragments of the modal language with $\lAll$ as only modality. The following theorem completely clarifies the complexity of all arising fragments. 
	
	\begin{theorem}
For all finite sets $B$,
$\MC{\MDL}(B,\{\lAll,\lDep\})$ 
is $\NLOGSPACE$-com\-plete.
	\end{theorem}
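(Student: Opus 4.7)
The plan is to establish containment in $\NLOGSPACE$ and matching $\NLOGSPACE$-hardness separately. For containment, the crucial observation is that $\lAll$ commutes with every Boolean connective: although Lemma~\ref{lemma:lAll} is phrased as a list of implications, the semantic clause $\Model, T \models \lAll\phi$ iff $\Model, \succTeam{T} \models \phi$ together with the clause for $f(\phi_1,\dots,\phi_n)$ immediately yields the two-sided equivalence $\lAll f(\phi_1,\dots,\phi_n) \equiv f(\lAll\phi_1,\dots,\lAll\phi_n)$. Hence every $(B,\{\lAll,\lDep\})$-formula $\phi$ is equivalent to one in which each $\lAll$ stands directly above an atomic subformula (a proposition, a negated proposition, or a dependence atom). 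No explicit rewriting is needed: for each leaf $\alpha$ of the parse tree of $\phi$ let $k(\alpha)$ denote the number of $\lAll$'s on the path from $\alpha$ to the root; then $\Model, T\models\phi$ reduces to evaluating the Boolean circuit obtained from $\phi$'s connectives with each leaf $\alpha$ replaced by the truth value $\evalTerm{\alpha}{R^{k(\alpha)}(T)}$, where $R^{k}(T)$ denotes the $k$-fold iterate of $\succTeam{\cdot}$.

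Each such leaf truth value lies in $\NLOGSPACE$. For $\alpha = p$ the question is whether every state reachable from $T$ by a path of length exactly $k$ satisfies $p$; its negation is a bounded-path reachability problem (guess a $k$-step walk from some $s\in T$ to a state $w$ with $p\notin\pi(w)$, keeping only the current state and a step counter in logarithmic space), and therefore lies in $\CO{\NLOGSPACE}=\NLOGSPACE$ by Immerman--Szelepcs\'enyi. The case $\alpha=\lAtomicNeg p$ is symmetric, and for $\alpha=\lDep[p_1,\dots,p_n,q]$ one guesses instead two witnesses $w,w'$ together with $k$-step paths from $T$ and checks that they agree on $p_1,\dots,p_n$ but differ on $q$. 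To combine the leaf values according to the outer Boolean structure of $\phi$, the plan is to invoke the standard logspace algorithm for the Boolean formula value problem with oracle access to the leaf subroutine; since $\LOGSPACE^{\NLOGSPACE}=\NLOGSPACE$, the whole procedure stays in $\NLOGSPACE$.

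For hardness, the plan is to reduce from $\CO{\REACH}$, which is $\NLOGSPACE$-complete. Given a digraph $G=(V,E)$ and vertices $s,t$, construct a Kripke model $\Model$ whose states are $V$, whose transition relation is $E$ augmented by a self-loop at every vertex, and in which a fresh proposition $p$ is true at every vertex except $t$. Setting $T=\{s\}$ and letting $\phi$ be $|V|$ consecutive $\lAll$'s applied to $p$, the self-loops ensure that $R^{|V|}(\{s\})$ equals the full set of vertices reachable from $s$ in $G$, so $\Model,T\models\phi$ iff $t$ is unreachable from $s$. Note that $\phi$ uses neither $\lDep$ nor any Boolean connective, so hardness holds uniformly over every choice of $B$---indeed already for the fragment whose only nontrivial symbol is $\lAll$.

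The main subtle point is located in the containment direction: one must verify that Lemma~\ref{lemma:lAll} genuinely upgrades to an equivalence when pushing $\lAll$ past arbitrary Boolean connectives (a routine team-semantic calculation, as indicated above), and that composing the logspace formula-value algorithm with the nondeterministic leaf subroutine truly stays in $\NLOGSPACE$ rather than blowing up the space bound (a standard oracle-collapse fact). Once the commutation of $\lAll$ with Boolean connectives is cleanly in hand, no further genuinely new obstacle is expected.
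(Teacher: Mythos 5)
Your proposal is correct and follows essentially the same route as the paper: push $\lAll$ through the Boolean connectives to the leaves (the paper's Lemma~\ref{lemma:lAll}, which as you note is really a two-sided equivalence), evaluate each modal leaf by a $\CO{\NLOGSPACE}=\NLOGSPACE$ guess-two-witnesses-at-the-right-depth subroutine, combine via the formula-value algorithm inside $\NLOGSPACE$, and prove hardness by a self-loop-augmented (non-)reachability reduction. The only cosmetic difference is that your hardness gadget uses a plain proposition $p$ and reduces from $\CO{\REACH}$, whereas the paper uses the dependence atom $\lDep[q]$ and phrases the reduction from $\REACH$; both rely on $\NLOGSPACE=\CO{\NLOGSPACE}$ and are interchangeable.
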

	\begin{proof}
To prove hardness, we give a reduction from the standard NL-complete graph reachability problem.
		Let $\langle G=(V,E), s,t \rangle$ be a instance of $\REACH$, then we construct a Kripke model $\Model=(W,R,\pi)$ as follows:
			$$\begin{array}{cll}
				(W, R) & \defEQ &  (V, E\cup \{(v,v)\ |\ v \in V\})\\
				\pi(q) & \defEQ & \; V \setminus \{t\}\\
			\end{array}$$
Now we conclude: If there is no path in $G$ from $s$ to $t$, then $t$ is not contained in in one of the first $|V|-1$ breath depth first search levels. By the definition of $\pi$ it holds, that all vertices in the first $|V|-1$ levels are labeled with the proposition $q$ and therefore $\lAll^{(|W|-1)}\lDep[q]$ holds on $\Model$ at the starting team $\{s\}$. The converse direction is proved similarly. 

The membership result uses a well known fact by Buss \cite{bu87}, that propositional formulae can be evaluated in $\NLOGSPACE$. Because of Lemma~\ref{lemma:lAll} we know that modalities can only occur as a sequence at the leafs of the formula tree followed by an atomic formula. Every time the Buss algorithm needs to evaluate such a modal leaf we evaluate that leaf in $\NLOGSPACE$ with Algorithm \ref{alg:conl}  and the Buss algorithm can proceed with the corresponding Boolean value. This procedure is shown in the algorithm given in the appendix, where $\phi_\text{leaf}$ is such a modal leaf and $\text{Depth}_\text{modal}(\phi_\text{leaf})$ gives the length of the modal sequence.

\begin{algorithm}[htp!]
	\caption{$\CO{\NLOGSPACE}$ leaf checking algorithm $\MC{\MDL}(\CloneBF,\{\lAll\})$}
	\label{alg:conl}
	\SetKwInOut{Input}{Input}
	\SetKwInOut{Output}{Output}
	\SetKw{UniGuess}{universally guess}
	\SetKw{ExGuess}{existentially guess}
	\SetKw{KwNot}{not}
	\SetKw{KwAnd}{and}
	\SetKw{KwOr}{or}
	\SetKw{KwTrue}{true}
	\SetKw{KwFalse}{False}
	\SetKw{KwAccept}{accept}
	\SetKw{KwReject}{reject}

	\Input{$\MDL_\CloneBF$ formula $\phi$, Kripke model $\Model=(W,R,\pi)$, team $T \subseteq W$ and leaf $\phi_\text{leaf}$}
	\Output{Is $\phi_\text{leaf}$  satisfied in $\Model$ on $T$?}

	\UniGuess $w_1 \in W$ with $d(t,w_1) = \text{Depth}_{\text{modal}}(\phi_\text{leaf})$ for $t \in T$

	\UniGuess $w_2 \in W$ with $d(t,w_2) = \text{Depth}_{\text{modal}}(\phi_\text{leaf})$ for $t \in T$

	\uIf{$\phi_\text{leaf} = \lDep[p_1,\dots,p_n]$}{
		labellingAgrees $\leftarrow$ \KwTrue

		\For{$i \leftarrow 0$ \KwTo $n-1$} {
			\If{
				\KwNot $(p_i \in \pi(w_1) \Leftrightarrow p_i \in \pi(w_2))$}{
				labellingAgrees $\leftarrow$ \KwFalse
			}
		}

		\If{labellingAgrees} {
			\If {\KwNot $(p_n \in \pi(w_1) \Leftrightarrow p_n \in \pi(w_2))$}{
				\KwReject
			}
		}
	}
	\uElseIf{$\phi_\text{leaf} = p$} {
		\If{$p \not\in \pi(w_1)$}{
			\KwReject
		}
	}
	\ElseIf{$\phi_\text{leaf} = \lAtomicNeg{p}$} {
		\If{$p \in \pi(w_1)$}{
			\KwReject
		}
	}

	\KwAccept
\end{algorithm}

\qed
\end{proof}


	%
	%
	
	In the rest of the section we study the modal language with modality $\lDiamond$. However, we will see that all obtained classifications hold as well for the modal language with both modalities $\lDiamond$ and $\lAll$. The results we will obtain are summarized in Fig.~\ref{diamond}.
	

	\begin{figure}[htp!]
		\centering
		\begin{tikzpicture}[scale=0.7]
			\node[clone,complexityClass1] (ID) at (4,0)  {$\CloneID$};
			\node[clone,complexityClass1] (E) at (1,4) {$\CloneE$};
			\node[clone,complexityClass1] (V) at (3,4) {$\CloneV$};
			\node[clone,complexityClass1] (M) at (2,8) {$\CloneM$};
			\node[clone,complexityClass2] (N) at (5,4) {$\CloneN$};
			\node[clone,complexityClass3] (L) at (5,8) {$\CloneL$};
			\node[clone,complexityClass3] (BF) at (3.5,12) {$\CloneBF$};
			
			\path[cloneRelation] (ID) edge (E) (ID) edge (V) (ID) edge (N);
			\path[cloneRelation] (E) edge (M) (V) edge (M);
			\path[cloneRelation] (M) edge (BF);
			\path[cloneRelation] (N) edge (L);
			\path[cloneRelation] (L) edge (BF);
			
			\node[clone,complexityClass1,label=right:{$\NPTIME$}] (class1) at (8,1) {};
			\node[clone,complexityClass2,label=right:{$\PPOWERNP$}] (class2) at (8,3) {};
			\node[clone,complexityClass3,label=right:{$\PSPACE$}] (class3) at (8,5) {};
		\end{tikzpicture}
		\caption{Complexity of $\MC{\MDL}(\{\lDiamond,\lDep\})$ and
				$\MC{\MDL}(\{\lDiamond,\lAll,\lDep\})$
				}\label{diamond}
	\end{figure}

	In order to prove the upper bound for the negation clone $\CloneN$, we have to prove a property called downwards closure for the universal modal operator.
	
			We say that a logic has the \emph{downwards closure property} if for all Kripke models $\Model$, all teams $T$ over $\Model$ and all formulae $\phi$, if $\phi$ is satisfied in $\Model$ on T, then it is satisfied on any subset $T'$ of $T$. We also say that in this case, $\phi$ is downards closed.

	\begin{lemma}\label{lemma_dc-box-equiv-all}
		Let $\phi$ be a downwards closed $\MDL$ formula. Then the formula $\lBox\phi$ is logically equivalent to $\lAll\phi$.
	\end{lemma}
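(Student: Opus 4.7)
The plan is to unfold both sides into a common universal statement over team successors and then close the equivalence using downwards closure. First I rewrite $\lBox\phi=\lNeg\lDiamond\lNeg\phi$: combining the team-semantics clause for $\lDiamond$ with classical negation, $\Model,T\models\lBox\phi$ holds iff no $T'\in\succTeamSet{T}$ satisfies $\lNeg\phi$, equivalently iff every $T'\in\succTeamSet{T}$ satisfies $\phi$. The claim therefore reduces to showing that this universal statement over $\succTeamSet{T}$ is equivalent to $\Model,\succTeam{T}\models\phi$, which is exactly $\Model,T\models\lAll\phi$.

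For the direction $\lAll\phi\Rightarrow\lBox\phi$, I fix an arbitrary $T'\in\succTeamSet{T}$. The very definition of $\succTeamSet{T}$ yields $T'\subseteq\succTeam{T}$, and downwards closure of $\phi$ then transports $\Model,\succTeam{T}\models\phi$ to $\Model,T'\models\phi$, as required.

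For the direction $\lBox\phi\Rightarrow\lAll\phi$, the crucial observation is that $\succTeam{T}$ is itself the largest member of $\succTeamSet{T}$ whenever each $s\in T$ has at least one $R$-successor: $\succTeam{T}\subseteq\succTeam{T}$ holds trivially, and by construction each such $s$ has some $s'\in\succTeam{T}$ with $(s,s')\in R$. Instantiating the universal statement at $T'\defEQ\succTeam{T}$ immediately gives $\Model,\succTeam{T}\models\phi$, so $\Model,T\models\lAll\phi$.

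I expect the main subtlety to be the corner case where some $s\in T$ has no $R$-successor: then $\succTeamSet{T}=\emptyset$, so $\lBox\phi$ is vacuously true, whereas $\succTeam{T}$ may still be nonempty and need not satisfy $\phi$. I would handle this either by restricting to serial frames, which is the standard convention in this line of work, or by using downwards closure to split $T$ into its productive part (on which the argument above applies) and its dead-end part (handled via the empty-team property), so that the equivalence is recovered in full generality.
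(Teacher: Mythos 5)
Your argument follows essentially the same route as the paper's: unfold $\lBox\phi$ (using that negation is classical here) into the statement that every $T'\in\succTeamSet{T}$ satisfies $\phi$, use $T'\subseteq\succTeam{T}$ together with downwards closure for the direction $\lAll\phi\Rightarrow\lBox\phi$, and instantiate at $T'=\succTeam{T}$ for the converse. You are in fact more careful than the paper, which spells out only the direction $\lBox\phi\Rightarrow\lAll\phi$ and silently assumes $\succTeam{T}\in\succTeamSet{T}$.

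The corner case you flag is genuine and is not addressed in the paper: if some $s\in T$ has no $R$-successor, then $\succTeamSet{T}=\emptyset$, so $\lBox\phi$ holds vacuously, while $\succTeam{T}$ may be nonempty and falsify $\phi$ (take $T=\{s_1,s_2\}$ with $s_2$ a dead end, a single edge $(s_1,w)$ with $p\notin\pi(w)$, and $\phi=p$; then $\Model,T\models\lBox p$ but $\Model,T\not\models\lAll p$). Of your two proposed repairs, only the seriality assumption works. The idea of splitting $T$ into a productive part and a dead-end part cannot recover the equivalence ``in full generality,'' because the counterexample just given shows that $\lBox\phi$ and $\lAll\phi$ are genuinely inequivalent on such teams even for downwards closed $\phi$; no argument can close that gap without strengthening the hypothesis. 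So the lemma should be read under the implicit assumption that every world occurring in a team under consideration has a successor, and its later use in Theorem~\ref{N} should be checked against this case as well.
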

	\begin{proof}
		Let $\phi$ be a downwards closed $\MDL$ formula, $\Model$ be a Kripke model, $T$ be team over $\Model$ and $\lBox\phi$ be satisfied in $\Model$ on $T$. By definition $\phi$ has to be satisfied on all successor teams in $\succTeamSet{T}$, especially on $\succTeam{T}$. Clearly all teams $T'$ in $\succTeamSet{T}$ are subsets of the team of all successors $\succTeam{T}$. Because of this and the fact that  $\phi$ is downwards closed, it is sufficient to check if $\phi$ evaluates to true on $\succTeam{T}$.
		\qed
	\end{proof}
	%
	%

	\begin{theorem}\label{N}
		Let $[B]=\CloneN$, then $\MC{\MDL}(B,\{\Diamond, \lDep\})$ is $\PPOWERNP$-complete.	
	\end{theorem}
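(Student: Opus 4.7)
For the upper bound, I would exploit the fact that with only $\lNeg$ as a Boolean connective, every formula is syntactically a linear sequence of $\lDiamond$'s and $\lNeg$'s ending in an atom, negated atom, or dependence atom. Using $\lNeg\lDiamond\psi\equiv\lBox\lNeg\psi$ together with its dual $\lDiamond\lNeg\psi\equiv\lNeg\lBox\psi$ (a direct unfolding of the definition $\lBox\phi\equiv\lNeg\lDiamond\lNeg\phi$), I can push every negation to the outermost position, producing an equivalent formula of the shape $\lNeg^{\epsilon}\chi$ with $\epsilon\in\{0,1\}$ and $\chi = M_1 M_2 \cdots M_k\,\alpha$, where each $M_i\in\{\lDiamond,\lBox\}$ and $\alpha$ is an atom, negated atom, or dependence atom. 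All of these base cases $\alpha$ are downwards closed, and both $\lDiamond$ and $\lBox$ preserve downward closure by a direct check from the team semantics, so every subformula of $\chi$ is downwards closed. Lemma~\ref{lemma_dc-box-equiv-all} therefore applies at every depth, letting me replace every $\lBox$ in $\chi$ by $\lAll$; call the result $\chi'$. Model-checking $\chi'$ is in $\NPTIME$: $\lAll$-steps deterministically set the current team $T$ to $\succTeam{T}$, $\lDiamond$-steps nondeterministically guess a polynomial-size $T'\in\succTeamSet{T}$, and the terminal check of $\alpha$ on a polynomial-size team is in $\PTIME$. Deciding the original formula then amounts to this single NP query for $\chi'$ followed by flipping the answer if $\epsilon=1$, i.e.\ a single NP oracle call with polynomial-time pre- and post-processing, placing the problem in $\PPOWERNP$.

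For the lower bound, I would reduce from the $\PPOWERNP$-complete problem $\textrm{BITSAT}:=\{(\phi,b):\phi\in\SAT\Leftrightarrow b=1\}$. Its membership in $\PPOWERNP$ is immediate (query $\phi\in\SAT$ and compare to $b$), and hardness follows from the standard characterization of $\PPOWERNP$ in the form $L=(A\cap U_1)\cup(A^c\cap U_0)$ with $A\in\NPTIME,\,U_0,U_1\in\PTIME$ by case-distinguishing on the two polytime bits $U_0(x),U_1(x)$ and emitting either $(g(x),1)$ or $(g(x),0)$, where $g$ is the standard Cook reduction of $A$ to $\SAT$. The reduction of BITSAT to $\MDLMC(B,\{\lDiamond,\lDep\})$ then rests on the MDL encoding of $3$-SAT implicit in~\cite{eblo12}: given $\phi$ on $x_1,\ldots,x_n$ with clauses $C_1,\ldots,C_m$, build a Kripke model whose initial team is $\{c_1,\ldots,c_m\}$, where $c_i$'s successors are in bijection with the literals of $C_i$ and carry propositions $q_1,\ldots,q_{\lceil\log n\rceil}$ binary-encoding the variable index together with a proposition $p$ for polarity; then $\Phi_\phi:=\lDiamond\,\lDep[q_1,\ldots,q_{\lceil\log n\rceil},p]$ is satisfied at $\{c_1,\ldots,c_m\}$ iff $\phi$ is satisfiable. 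Since $\Phi_\phi$ contains no Boolean connective, it is a $(B,\{\lDiamond,\lDep\})$-formula for every $B$. Given $(\phi,b)$, I output the model-checking instance with formula $\Phi_\phi$ if $b=1$ and with $\lNeg\Phi_\phi$ if $b=0$; correctness is immediate, and $\lNeg$ is available because $[B]=\CloneN$.

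The main obstacle I anticipate is the lower bound: one needs a $\PPOWERNP$-complete source problem whose structure can be matched by the extremely restricted syntax afforded by $\CloneN$, which permits only a single outer negation wrapped around an otherwise purely $\{\lDiamond,\lDep\}$-formula. The BITSAT formulation turns this limitation into a feature, because the reduction's only freedom is exactly whether or not to prepend $\lNeg$, precisely mirroring the input bit $b$. The upper bound, in contrast, is essentially a normal-form argument plus Lemma~\ref{lemma_dc-box-equiv-all}; the only non-trivial point is the routine verification that $\lDiamond$ and $\lBox$ both preserve downwards closure, which one sees by restricting any witness team that realises the formula on the larger team to the successors actually used at the states of the smaller team.
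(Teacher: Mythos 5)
Your proposal is correct and follows essentially the same route as the paper: the upper bound normalizes the formula to at most one outer negation over a $\lDiamond/\lBox$-sequence, uses downwards closure and Lemma~\ref{lemma_dc-box-equiv-all} to turn $\lBox$ into $\lAll$, and answers with a single NP query; the lower bound uses the same Ebbing--Lohmann-style clause/literal Kripke model with the outer $\lNeg$ absorbing the coNP side. The only real difference is organizational -- you factor the hardness through the intermediate complete problem $\textrm{BITSAT}$ (whose four-case hardness argument is exactly the paper's four acceptance behaviours of the oracle machine), and you verify the NP membership of the $\{\lDiamond,\lAll,\lDep\}$ fragment and the downwards-closure of all subformulae directly rather than citing them.
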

	\begin{proof}
		Let $\phi$ be a $(B,\{\Diamond, \lDep\})$-formula and 
		$k_1,\dots,k_n \in \mathbb{N}$. Then $\phi$
		is always either of the form $\phi \defEQ \lNeg\phi'$ or $\phi \defEQ \phi'$, where
			$$\phi' \defEQ \Diamond^{k_1}\lBox^{k_2}\dots\lBox^{k_n}\lambda,\quad
				 \text{$\lambda$ is a literal or a dependence atom.}$$
		Let $\hat{\phi}$ be any sub formula of $\phi'$. Then it follows from Lemma \ref{lemma_dc-box-equiv-all}, that $\lBox\hat{\phi}$ can be  replaced by $\lAll\hat{\phi}$. Hence we can rewrite $\phi'$ as
			$$\phi' \defEQ \ \Diamond^{k_1}\lAll^{k_2}\dots\lAll^{k_n}\lambda,\quad  
				 \text{$\lambda$ is a literal or a dependence atom.}$$			
Thus, model checking for $\phi$ can be reduced clearly to model checking for $\phi'$.	Since Ebbing and Lohmann showed in \cite{eblo12} that MDL model checking for the operator fragment $\{\lDiamond,\lAll,\lDep,\lAtomicNeg{\cdot}\}$ is NP-complete, we conclude $\MC{\MDL}(\CloneN, \{\lDiamond, \lDep\})$ is in $\PPOWERNP$.
		
		It remains to show that $\MC{\MDL}(\{\CloneN, \lDiamond, \lDep\})$  is $\PPOWERNP$-hard.
		Let $A \in \PPOWERNP$, and let the corresponding Turing-Machine be $M_A$. We have to show that $A \leq^p_m$ $\MC{\MDL}(\CloneN, \{\lDiamond, \lDep\})$.
		
		In the polynomial many-one reduction, we can simulate the polynomial part of the machine. Therefore the only thing that is
		left, is the oracle question and four possible acceptance behaviours of $M_A$ as shown
		in Figure \ref{fig:THmcDTD-Modus}.
		
		\begin{figure}[h!]
			\centering
			\subfloat[$f \in \SAT$]{
				\label{fig:THmcDTD-ModusNP}
				\begin{tikzpicture}
					\draw (0,0) node (input) {input x};
					\draw (0,-1) node (oracle) {$f \in \SAT$};
					\draw (-1,-2) node (acc) {acc};
					\draw (1,-2) node (rej) {rej};

					\draw (input) -- (oracle);
					\draw (oracle) -- node[left=0.15cm] {$1$} (acc);
					\draw (oracle) -- node[right=0.15cm] {$0$}(rej);
				\end{tikzpicture}
			}
			\subfloat[$f \in \overline{\SAT}$]{
				\label{fig:THmcDTD-ModusCoNP}
				\begin{tikzpicture}
					\draw (0,0) node (input) {input x};
					\draw (0,-1) node (oracle) {$f \in \SAT$};
					\draw (1,-2) node (acc) {acc};
					\draw (-1,-2) node (rej) {rej};

					\draw (input) -- (oracle);
					\draw (oracle) -- node[left=0.15cm] {$1$} (rej);
					\draw (oracle) -- node[right=0.15cm] {$0$}(acc);
				\end{tikzpicture}
			}
			\subfloat[Accept always]{
				\label{fig:THmcDTD-ModusAccept}
				\begin{tikzpicture}
					\draw (0,0) node (input) {input x};
					\draw (0,-1) node (oracle) {$f \in \SAT$};
					\draw (1,-2) node (acc1) {acc};
					\draw (-1,-2) node (acc2) {acc};

					\draw (input) -- (oracle);
					\draw (oracle) -- node[left=0.15cm] {$1$} (acc2);
					\draw (oracle) -- node[right=0.15cm] {$0$}(acc1);
				\end{tikzpicture}
			}
			\subfloat[Reject always]{
				\label{fig:THmcDTD-ModusReject}
				\begin{tikzpicture}
					\draw (0,0) node (input) {input x};
					\draw (0,-1) node (oracle) {$f \in \SAT$};
					\draw (1,-2) node (rej1) {rej};
					\draw (-1,-2) node (rej2) {rej};

					\draw (input) -- (oracle);
					\draw (oracle) -- node[left=0.15cm] {$1$} (rej2);
					\draw (oracle) -- node[right=0.15cm] {$0$}(rej1);
				\end{tikzpicture}
			}
			\caption{Acceptance cases in $\PPOWERNP$.}
			\label{fig:THmcDTD-Modus}
		\end{figure}
		
		$\SAT$ is represented in $\MC{\MDL}(\CloneN, \{\lDiamond, \lDep\})$ in the same way it is represented by Ebbing and Lohmann in \cite{eblo12}, but we have to adjust our formula to represent the four possible acceptance cases.\\
		Let $\psi := \bigwedge_i C_i$ be the $\SAT$ oracle question and $g(\psi) = \langle\Model,T,\phi\rangle$ the reduction function.
	
		\noindent
		The Kripke structure $\Model=(W,R,\pi)$ is defined as follows: 
			$$W \defEQ \{c_1,\dots,c_n,s_1,\dots,s_m,\lAtomicNeg{s}_1,\dots,\lAtomicNeg{s}_m\},$$
		\begin{equation*}
			R \supseteq 
			\begin{cases}
				\{(c_i,s_j)\} & \text{, if } x_j \text{ occurs in } C_i\\
				\{(c_i,\lAtomicNeg{s}_j)\}& \text{, if } \lAtomicNeg{x}_j 
				\text{ occurs in } C_i,
			\end{cases}
		\end{equation*}
			$$\pi(s_i) \supseteq \{p_i,q\},\quad\quad \pi(\lAtomicNeg{s}_i) \supseteq \{p_i\}.$$
		The initial team is defined by the worlds, which representing the clauses of $\psi$.
			$$T \defEQ \{c_1,\dots,c_n\},$$
					
		\begin{equation*}
			\phi \defEQ 
				\begin{cases}
					\Diamond\lDep[p_1,\dots,p_m,q]&\text{, } \psi \in \SAT\\
					\lNeg\Diamond\lDep[p_1,\dots,p_m,q]&\text{, } \psi \in \overline{\SAT}\\
					\lTrue&\text{, accept always}\\
					\lFalse&\text{, reject always}.\\
				\end{cases}
		\end{equation*}

		The correctness follows directly from the $\MC{\MDL}(\CloneN, \{\lDiamond, \lDep\})$ correctness proof in
		\cite{eblo12} and the definition of $\lNeg$.
		\qed
	\end{proof}

	%
	%
	\begin{theorem}\label{L}
		Let $[B]\supseteq\CloneL$, then $\MC{\MDL}(B, \{\Diamond, \lDep\})$ is $\PSPACE$-complete.
	\end{theorem}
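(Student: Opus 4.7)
The plan is first to establish the $\PSPACE$ upper bound, which is routine, and then to prove $\PSPACE$-hardness, which is the main technical content. For membership in $\PSPACE$, I would give a recursive polynomial-space model-checking procedure: on input $(\Model, T, \phi)$ we branch on the outermost connective; for $f(\phi_1,\ldots,\phi_n)$ we recursively compute each $\evalTerm{\phi_i}{T}$ and apply $f$; for $\lDiamond\psi$ we existentially enumerate successor teams $T' \in \succTeamSet{T}$ (each a subset of $W$, hence describable in polynomial space) and recurse; literals and dependence atoms are decided in polynomial time. Since the recursion depth is bounded by the length of the formula and each call uses polynomial additional space, the algorithm runs in $\PSPACE$.

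For $\PSPACE$-hardness I would reduce from $\QBF$. Since $[B] \supseteq \CloneL$ we have $\oplus$ available, and because constants are freely expressible in model checking (as noted in Section~\ref{prelim}) we also have $\lNeg$ via $\lNeg x = x \oplus 1$; hence the derived modality $\lBox\phi \equiv \lNeg\lDiamond\lNeg\phi$ is at our disposal, giving both existential ($\lDiamond$) and universal ($\lBox$) team-theoretic quantification aligned with the $\exists/\forall$ prefix of a $\QBF$ instance. Given $\Psi = Q_1 x_1 \cdots Q_k x_k\, \phi(x_1,\ldots,x_k)$, I would construct a Kripke model whose reachability tree has $k$ levels, each branching by the value of $x_i$ with propositions marking the chosen assignments; the translation of $\Psi$ alternates $\lDiamond$ and $\lBox$ according to $Q_i$, culminating in a subformula over the leaves that encodes the matrix~$\phi$.

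The principal obstacle is that $\CloneL$ consists solely of affine Boolean functions and cannot classically express $\lAnd$ or $\lOr$, whereas the matrix of a $\QBF$ is arbitrary. My strategy to get around this is to shift the non-linearity into the modal and team-semantic structure: extend the Kripke model with a further level of successors, one per clause or subformula of $\phi$, labelled so that the dependence atom $\lDep[p_1,\ldots,p_k,q]$ combined with a single $\oplus$ over the leaf evaluations witnesses satisfaction of the whole matrix. In effect, one designs gadgets so that the parity read out by $\oplus$ at the final layer encodes the condition \emph{every clause is satisfied}, rather than trying to express conjunction at the connective level. The most delicate step will be verifying correctness by induction on the quantifier depth, showing that the teams produced by alternating $\lDiamond$ and $\lBox$ indeed correspond to partial assignments extending to winning strategies for the respective player in $\Psi$, and that the final $\oplus$-gadget correctly reads off satisfaction of the matrix under the chosen assignment.
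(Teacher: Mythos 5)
Your upper bound matches the paper's (the same recursive polynomial-space evaluation), and the overall skeleton of your hardness argument --- reduce from $\QBF$, recover $\lNeg$ from $\oplus$ and the free constants, simulate the quantifier prefix by alternating $\lDiamond$ and $\lBox=\lNeg\lDiamond\lNeg$, and push the non-affine work into the team semantics of the dependence atom --- is exactly the paper's plan. However, there is a genuine gap at the one place where $\oplus$ actually has to earn its keep. You propose that ``the parity read out by $\oplus$ at the final layer encodes the condition \emph{every clause is satisfied}.'' That cannot work as stated: with $\oplus$ and constants you can only form affine combinations of the leaf values, and a conjunction of $m\ge 2$ clause-satisfaction bits is not affine. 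In the paper the conjunction over clauses is \emph{not} obtained from $\oplus$ at all; it comes from the standard NP-hardness gadget, where the team simultaneously contains one chain per clause, a final $\lDiamond$ selects one literal-witness per clause, and the dependence atom $\lDep[p_1,\dots,p_n,q]$ (whose team semantics is a universally quantified condition over pairs of worlds) enforces that all these selections are consistent with the assignment.

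The actual, essential role of $\oplus$ is one your proposal does not mention: soundness of the universal step. Since $\lBox$ ranges over \emph{all} successor teams in $\succTeamSet{T}$, it also produces ``inconsistent'' teams that contain both $x_i$ and $\lAtomicNeg{x}_i$ for a universally quantified variable, and these do not correspond to any branch of the $\QBF$ game tree. The paper neutralizes them with the matrix formula $\lNeg\lDep[p_2,p_4,\dots,p_{n-1},q]\lParity\lDiamond\lDep[p_1,\dots,p_n,q]$: on an inconsistent team the first argument is true and the second is provably false, so the $\oplus$ evaluates to true vacuously (Claim~1 of the paper), while on consistent teams the first argument is false and the formula reduces to the clause gadget (Claim~2). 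Without some such mechanism your $\lBox$-simulation of $\forall$ is unsound, so this is a missing idea rather than a presentational omission. (A second, smaller point: you would also need the paper's delay-chain construction, with one connected component per variable whose branching is postponed to modal depth $i$, so that the $i$-th modality controls only the choice for $x_i$; a single branching tree does not give you this synchronization.)
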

	
	\begin{proof}
To prove hardness, we give a reduction from the standard PSPACE-complete problem QBF, the validity problem for quantified Boolean formulae, to $\MC{\MDL}(\{\oplus\}, \{\Diamond, \lDep\})$.

		Let $\phi \defEQ \exists x_1 \forall x_2 \dots \exists x_n (C_i)$ be a QBF formula. The $\QBF$ formula will be transformed to a $\MC{\MDL}$ instance $\langle \Model=(W,R,\pi),T,\psi\rangle$ as follows. For each quantified variable we will construct one connected component. In these connected component the nesting of the variables will be simulated by \emph{value states} $x_i,\lAtomicNeg{x}_i$ and \emph{delay states} $d_i$. There are also components for each clause.	
		\begin{align*}
			W \defEQ	& \bigcup_{i=1}^n\big(\{x_i, \lAtomicNeg{x}_i\} \cup \{x_i^j, \lAtomicNeg{x}_i^j | i \leq j \leq n\}\ \cup \{d_i^j | 1 \leq j \leq i \}\big)\\
					& \ \cup\ \{c_i^j | 1 \leq i \leq m,\ 1 \leq j \leq n+1\} \\
		\end{align*}	
		For the quantified variable $x_i$ the variables value decision will be made at point $i$. At the decision point the natural ordering of delay states will branch in a natural ordering of the different values states.	
		\begin{align*}
			R \defEQ 	 \bigcup_{i=1}^n\bigg(&\{(x_i^j, x_i^{j+1}) | 1 \leq j < n-i\}\ \cup 
				  	  \{(\lAtomicNeg{x}_i^j, \lAtomicNeg{x}_i^{j+1}) | i \leq j \leq n\}\ \cup \\
					& \{(x_i^j, x_i) | j = n-i\}\ \hspace{0.22cm} \cup\ \{(\lAtomicNeg{x}_i^j, \lAtomicNeg{x}_i) |  j = n-i\}\ \cup \\
					& \{(d_i^j,d_i^{j+1}) |  1 \leq j < i \} \cup\  \{(d_i^i,x_i^{i+1}), (d_i^i, \lAtomicNeg{x}_i^{i+1})\} \cup\\
					& \{(c_i^j, c_i^{j+1}) | 1 \leq i \leq n\}\bigg) \cup \{(c_i,x_i | 1 \leq i \leq m,\ x_i \in C_i\}\\
		\end{align*}
		The starting team is the set of all initial delay nodes and the initial clause nodes.
		$$T \defEQ \{d_i^0 | 1 \leq i \leq n\} \cup \{c_i^1 | 1 \leq i \leq m\}$$	
		At last we have to define the labelling of the Kripke structure. For each positive value world $x_i^j,\lAtomicNeg{x}_i^j,\lAtomicNeg{x}_i,x_i$ $p_i$ is labelled to represent the variable and to represent the value on each positive value node $q$ is labelled also. 
		\begin{align*}
			\pi(p_i) \defEQ & \{x_i^j,x_i, \lAtomicNeg{x}_i^j,\lAtomicNeg{x}_i | 1 \leq i \leq n, i \leq j \leq n\} \\
			\pi(q) \defEQ & \{x_i^j,x_i | 1 \leq i \leq n, i \leq j \leq n\}
		\end{align*}
		The following formula $\psi$ will simulate the $\QBF$ evaluation on the given Kripke model $\Model=(W,R,\pi)$ over the team starting team $T$.
		\begin{align*}
			\psi \defEQ & \underbrace{\lDiamond\lBox\dots\lDiamond}_{\text{n-times}}\underbrace{(\lNeg\lDep[p_2,p_4,\dots,p_{n-1},q] \lParity \Diamond\lDep[p_1,\dots,p_n,q])}_{\psi'}\\
		\end{align*}
		
		\begin{figure}
			\centering
				\begin{tikzpicture}
					\tikzstyle{every node}=[circle,draw,text=black, minimum size=5mm, inner sep=0pt, font=\scriptsize]
					\tikzstyle{every path}=[->]
					\node  (x1) at (0,0) {$d_1^1$};
					\node  (x21) at (1,0.5)  {$x_2^2$};
					\node  (x20) at (1,-0.5) {$\lAtomicNeg{x}_3^2$};
					\node  (x31) at (2,0.5)  {$x_2^3$};
					\node  (x30) at (2,-0.5) {$\lAtomicNeg{x}_3^3$};
					\node  (x41) at (3,0.5)  {$x_1^4$};
					\node  (x40) at (3,-0.5) {$\lAtomicNeg{x}_3^4$};

					\node  (y1) at (0,-2) {$d_2^1$};
					\node  (y2) at (1,-2) {$d_2^2$};
					\node  (y31) at (2,-1.5) {$x_2^3$};
					\node  (y30) at (2,-2.5) {$\lAtomicNeg{x}_2^3$};
					\node  (y41) at (3,-1.5) {$x_2^4$};
					\node  (y40) at (3,-2.5) {$\lAtomicNeg{x}_2^4$};
					
					\node  (z1) at (0,-4) {$d_3^1$};
					\node  (z2) at (1,-4) {$d_3^2$};
					\node  (z3) at (2,-4) {$d_3^3$};
					\node  (z41) at (3,-3.5) {$x_3^4$};
					\node  (z40) at (3,-4.5) {$\lAtomicNeg{x}_3^4$};

					\node  (xv1) at (4.5,0.5)  {$x$};
					\node  (xv0) at (4.5,-0.5) {$\lAtomicNeg{x}$};
					\node  (yv1) at (4.5,-1.5) {$y$};
					\node  (yv0) at (4.5,-2.5) {$\lAtomicNeg{y}$};
					\node  (zv1) at (4.5,-3.5) {$z$};
					\node  (zv0) at (4.5,-4.5) {$\lAtomicNeg{z}$};

					\node  (c4) at (7,-1) {$c_1^4$};
					\node  (c3) at (8,-1) {$c_1^3$};
					\node  (c2) at (9,-1) {$c_1^2$};
					\node  (c1) at (10,-1) {$c_1^1$};
					
					\node  (d4) at (7,-3) {$c_2^4$};
					\node  (d3) at (8,-3) {$c_2^3$};
					\node  (d2) at (9,-3) {$c_2^2$};
					\node  (d1) at (10,-3) {$c_2^1$};

					\path	(x1) edge (x21) edge (x20)
							(x21) edge (x31)
							(x20) edge (x30)
							(x31) edge (x41)
							(x30) edge (x40)
							(x41) edge (xv1)
							(x40) edge (xv0);

					\path	(y1) edge (y2)
							(y2) edge (y31) edge (y30)
							(y31) edge (y41)
							(y30) edge (y40)
							(y41) edge (yv1)
							(y40) edge (yv0);
							
					\path	(z1) edge (z2)
							(z2) edge (z3)
							(z3) edge (z41) edge (z40)
							(z41) edge (zv1)
							(z40) edge (zv0);

					%
					\path	(c1) edge (c2)
							(c2) edge (c3)
							(c3) edge (c4)
							(c4) edge[out=180, in=0] (xv1) edge[out=180, in=0] (yv0) edge[out=180, in=0] (zv0);	

					\path	(d1) edge (d2)
							(d2) edge (d3)
							(d3) edge (d4)
							(d4) edge[out=180, in=0, dashed] (yv1) edge[out=180, in=0, dashed] (xv1) edge[out=180, in=0, dashed] (zv1);
				\end{tikzpicture}
			\caption{Kripke Model for $\QBF$ formula $\phi \defEQ \exists{x}\forall{y}\exists{z} (x \lOr \lAtomicNeg{y} \lOr \lAtomicNeg{z}) \lAnd (x \lOr y \lor z)$}
		\end{figure}
		
		Let $\phi'$ be a $\CNF$ over the variables $x_1,\dots,x_n$ and $\phi=\exists{x_1}\forall{x_2}\dots\exists{x_n}\phi'$ be a satisfied $\QBF$ formula. Then $\Model, T \models \psi$. After $n$ modal quantification steps, the formula $\psi'$ will be evaluated on teams $T'$ over $\{x_1,\lAtomicNeg{x}_1, \dots, x_n,\lAtomicNeg{x}_n\}$, because in the $i$'s modal step we pick one ore both variable vertices in the connected component corresponding to the $i$'s variable. For convenience we say that a team is consistent if is does not contain a variable positively and negatively and not consistent if it does. In the following we want to choose satisfying teams with respect to the $\QBF$ assignment tree and show that these teams satisfy $\psi'$. In the case of existential quantification we can choose the variable path with respect to the $\QBF$ assignment, but for the universal quantification we have to ensure that the case of both variable assignments are picked, does not falsify $\psi'$.
		
		\begin{claim}[1]\label{pspace_claim1}
			Let $T'$ be a team over $\{x_1,\lAtomicNeg{x}_1, \dots, x_n,\lAtomicNeg{x}_n\}$, where the the universal quantified variable $x_i$ is contained positively and negatively. Then $\Model, T'\models \psi'$ holds.
		\end{claim}
		
To prove the claim, let $x_i$ and $\lAtomicNeg{x}_i$ be in T'. Then $\neg\lDep[p_2,p_4,\dots,p_{n-1},q]$ is true, because T' does not satisfy the dependence atom. By this it follows that $\Diamond\lDep[\vec{p},q]$ will also be false, because $\{p_2,p_4,\dots,p_{n-1}\} \subseteq \{p_1,\dots,p_n\}$ and the modal operator does not shrink the team $T'$. This proves the claim.
		
		With this claim in mind, we construct all consistent successor teams of the form $T'=\bigcup_{i=1}^{n}\{t_i^n\} \cup \bigcup_{i=1}^{m}\{c_i^k\}$, which are representing the assignments of a $\QBF$ assignment tree. In the last claim we show that a consistent team evaluated to true on $\psi'$ if and only if the corresponding variable assignments satisfies $\phi'$. 
		
Hardness now follows from the following claim.
		\begin{claim}[2]\label{pspace_claim2}
			Let T be of the form $\bigcup_{i=1}^{n}\{t_i^n\} \cup \bigcup_{i=1}^{m}\{c_i^k\}$, where $t_i \in \{x_i^n, \lAtomicNeg{x}_i^n\}$ and $\alpha(x_i) = 
			\begin{cases}
				1 & , t_i = \{x_i^n\} \\ 
				0 & \text{, otherwise}
			\end{cases}$. Then $\Model, T \models \psi'$ if and only if $\alpha \models \phi'$.
		\end{claim}
		We prove the claim. Because of team $T$ is being consistent the sub formula $\neg\lDep[p_2,p_4,\dots,p_{n-1},q]$ is false. It remains to show that $\Diamond\lDep[p_1,\dots,p_n,q]$ is true if and only if $\alpha \models \phi'$ holds and false otherwise. Let $\alpha \models \phi'$. Then each clause $C_j$ is satisfied by at least one variable assignment in $\alpha$. W.lo.g. let $C_j$ be satisfied by $x_i=0$. By definition it follows that $C_j$ is connected with the vertex $\lAtomicNeg{x}_i$ and by $\alpha(x_i)=0$ that $\lAtomicNeg{x}_i^n \in T$. Then by consistency of $T$ and $\alpha$ it directly follows that a consistent successor team $T'$ with $\lAtomicNeg{x}_i \in \succTeamSet{T}$ and $x_i \not\in \succTeamSet{T}$ exists. 

Finally, $\MC{\MDL}({B, \{\Diamond,\lDep\}}) \in \PSPACE$ follows from Algorithm~\ref{alg:pspace} evaluating the formula in the obvious way.

\begin{algorithm}[htp!]
	\caption{$\PSPACE$ algorithm $\textbf{check}(\Model,T,\phi)$}
	\label{alg:pspace}
	\SetKwInOut{Input}{Input}
	\SetKwInOut{Output}{Output}
	\SetKw{UniGuess}{universally guess}
	\SetKw{ExGuess}{existentially guess}
	\SetKw{KwNot}{not}
	\SetKw{KwAnd}{and}
	\SetKw{KwOr}{or}
	\SetKw{KwTrue}{true}
	\SetKw{KwFalse}{False}
	\SetKw{KwAccept}{accept}
	\SetKw{KwReject}{reject}
	\SetKw{KwReturn}{return}
	\SetKw{KwCheck}{check}

	\Input{$\MDL_\CloneBF$ formula $\phi$, Kripke model $\Model=(W,R,\pi)$ and team $T \subseteq W$}
	\Output{Is $\phi$ is satisfied in $\Model$ on $T$?}

	\uIf{$\phi = f(\phi_1, \dots, \phi_n)$)}{
		\KwReturn f(\KwCheck$(\Model, T, \phi_1), \dots, \KwCheck(\Model, T, \phi_n)$)
	}	
	\uElseIf{$\phi = p$} {
		\ForEach{$w_i \in T$}{
			\lIf{$p \not\in \pi(w_i)$}{
				\KwReturn false
			}
		}
		\KwReturn true

	}	
	\uElseIf{$\phi = \lAtomicNeg{p}$} {
		\ForEach{$w_i \in T$}{
			\lIf{$p \in \pi(w_i)$}{
				\KwReturn false
			}
		}
		\KwReturn true

	}	
	\uElseIf{$\phi = \lDep[p_1,\dots,p_n,q]$} {
		\ForAll{$p_i \textbf{ in } \{ p_1, \dots, p_n \}$}{
			\ForAll{$w_{t_j} \textbf{ in } \{ w_{t_1}, \dots, w_{t_m} \} =T$}{
				$c_{ij} = \textnormal{valid}(\Model,\{ w_{t_j} \},p_i)$
			}
		}
			
		\For{$w_{t_i} \in T$}{
			\For{$w_{t_j} \in T$} {
				\If{$c_i = c_j$}{
					\If{$\textnormal{valid}(\Model,\{ w_{t_i} \},q) \not= \textnormal{valid}(\Model,\{ w_{t_j} \},q)$}{
						\KwReturn false
					}
				}
			}
		}
		\KwReturn true
	}	
	\uElseIf{$\phi = \lDiamond\phi'$} {
		guess existentially $ T' \in \succTeamSet{T}$

		\KwReturn \KwCheck$(\Model$,$T'$,$\phi')$
	}
	\ElseIf{$\phi = \lAll\phi'$} {
		\KwReturn \KwCheck($\Model, \succTeam{T}, \phi'$)
	}	
\end{algorithm}

\qed	
\end{proof}

From results presented by Ebbing and Lohmann in \cite{eblo12}, in which the NP-completeness of some particular fragments for modal dependence logic model checking was shown, it follows that if a set $B$ of Boolean connectives forms a base for one
of the Boolean clones $\CloneID, \CloneE, \CloneV, \CloneM$, it yields an NP-complete model checking problem.

Together with Theorem~\ref{N} and \ref{L} above, we thus obtain a complete picture for the complexity of model checking for modal dependence logic, as given in Fig.~\ref{diamond}. For any set $B$ of Boolean connectives, the complexity falls in one of the cases given there.
	
	\section{Conclusion}
	
	We obtained a complete classification of the complexity of the model checking problem for modal dependence logic for formulas that may contain dependence atoms, one or two modalities, and symbols for arbitrary Boolean functions.
	
	What we did not address here is formulas that besides the arbitrary sets of Boolean connectives also involve dependence connectives, e.g., splitjunction, intuitionistic implication, linear implication, etc.
While partial results, at least for the first two mentioned connectives, are known, a full classification is still missing. For the case of splitjunction combined with the diamond modality it is known that the classification shown in Figure~\ref{diamond} is still valid expect for the $\CloneN$ case. In this case no result is known. The classification of splitjucntion combined with the $\lAll$ modality misses the cases $\CloneL$ and $\CloneN$.

	Even more interesting is maybe the question how to develop a general concept of what a dependence connective is, and then study complexity issues concerning formulas with arbitrary sets of dependence connectives, maybe via a similar lattice as Post's. First steps into the direction of a concept of such general connectives have been made by Antti Kuusisto (personal communication).

\newpage		
			
	\bibliographystyle{splncs03}
	\bibliography{mdlpost}
\end{document}